\documentclass[twocolumn,a4paper,aps]{revtex4}

\usepackage[english]{babel}
\usepackage[T1]{fontenc}
\usepackage[utf8]{inputenc}
\usepackage{bibrevtex4}

\usepackage{standalone}
\usepackage{fullpage}
\usepackage{microtype}
\usepackage{mathtools}
\usepackage{amssymb}
\usepackage{graphicx}
\usepackage[hidelinks]{hyperref}
\usepackage{tikz}
\usepackage{pgfplots}
\usepackage{booktabs}

\usepackage{csquotes}

\usepackage[binary-units=true]{siunitx}
\usepackage{lmodern}
\usepackage{paralist}
\usepackage[noabbrev,capitalise,poorman]{cleveref}
\usepackage{tipa}
\usepackage{textcomp}

\usepackage{enumerate}
\usepackage[symbol]{footmisc}
\usepackage{makeidx}
\usepackage{ntheorem}

\pgfplotsset{compat=1.9}
\usetikzlibrary{patterns}
\newtheorem{theorem}{Theorem}
\newtheorem{lemma}{Lemma}
\newtheorem{proof}{Proof}

\crefname{inequality}{Inequality}{Inequalities}
\creflabelformat{inequality}{#2{\upshape(#1)}#3}

\addbibresource{references.bib}
\DefineBibliographyStrings{english}{%
    urlseen = {accessed},
}
\newcommand\defeq{\stackrel{\mathclap{\normalfont\tiny\mbox{def}}}{=}}

\begin{document}
\title{Tight Bounds for the Pearle-Braunstein-Caves Chained Inequality Without
the Fair-Coincidence Assumption}

\author{Jonathan Jogenfors}
\affiliation{Institutionen för systemteknik, Linköpings Universitet, 581 83
Linköping, Sweden}

\author{Jan-Åke Larsson}
\affiliation{Institutionen för systemteknik, Linköpings Universitet, 581 83
Linköping, Sweden}

\begin{abstract}
    In any Bell test, loopholes can cause issues in the interpretation of
    the results, since an apparent violation of the inequality may not
    correspond to a violation of local realism. An important example is the
    coincidence-time loophole that arises when detector settings might influence
    the time when detection will occur. This effect can be observed in many
    experiments where measurement outcomes are to be compared between remote
    stations because the interpretation of an ostensible Bell violation strongly
    depends on the method used to decide coincidence. The coincidence-time
    loophole has previously been studied for the Clauser-Horne-Shimony-Holt
    (CHSH) and Clauser-Horne (CH) inequalities, but recent experiments have
    shown the need for a generalization. Here, we study the generalized
    ``chained'' inequality by Pearle-Braunstein-Caves (PBC) with $N\geq 2$
    settings per observer. This inequality has applications in, for instance,
    Quantum Key Distribution where it has been used to re-establish security. In
    this paper we give the minimum coincidence probability for the PBC
    inequality for all $N\geq 2$ and show that this bound is tight for a
    violation free of the fair-coincidence assumption. Thus, if an experiment
    has a coincidence probability exceeding the critical value derived here, the
    coincidence-time loophole is eliminated.
\end{abstract}
\maketitle
\section{Introduction}
In recent years there has been an increased interest in the \enquote{chained}
generalization by Pearle, Braunstein and
Caves (PBC)~\cite{Pearle1970Hidden,Braunstein1990Wringing} of the
CHSH~\cite{Clauser1969Proposed,Bell1964Einstein} inequality due to its
applications in re-establishing a full Bell violation. An important application
is Quantum Key Distribution (QKD) based on the Franson
interferometer~\cite{Franson1989Bell} where it is
known~\cite{Aerts1999Twophoton,Jogenfors2014Energy,Jogenfors2015Hacking} that
the CHSH inequality is insufficient as a security test. If the switch to the
full PBC is made, full security can be
re-established~\cite{Jogenfors2014Energy,Tomasin2017Highvisibility}.

Where the standard CHSH inequality is limited to two possible measurement
settings per observer, the PBC inequality generalizes this to $N\geq 2$
settings. In order for Franson-based systems to function, $N\geq 3$ is required
at the cost of significantly higher experimental requirements. Specifically,
such an experiment requires a very high visibility, and until recently it was
believed~\cite{Jogenfors2014Energy} that these requirements were too impractical to
achieve. Recent works~\cite{Tomasin2017Highvisibility}, however, showed it possible
to meet these requirements by reaching a full violation of the PBC inequality for
$N=3$, $4$, and $5$ with visibility in excess of \SI{94.63}{\percent}.

Compared to other types of QKD such as BB84~\cite{Bennett1984Quantum} and
E91~\cite{Ekert1991Quantum}, the Franson design promises a simpler approach with
fewer moving parts. This advantage could allow the Franson system to pave the
way for commercial applications and widespread QKD adoption by reducing end-user
complexity~\cite{Jogenfors2015Hacking}. Therefore, the possibility of
re-establishing full security in the Franson interferometer is a strong
motivation of further study of the PBC inequality.

Previous works~\cite{Larsson2004Bell,Larsson2014Bell} have shown that the CHSH
and CH inequalities are vulnerable to the coincidence-time loophole which
relates to the problem of attributing detector clicks to the correct pair of
events. Bipartite Bell experiments measure correlations of outcomes between
remote stations, and as this is done for each \emph{pair} of detections, one
must reliably decide which detector clicks correspond to which pair. This is
more difficult than it might first appear due to high levels of non-detections,
jitter in detection times, and dark counts. If coincidences are lost, one needs
to apply the \enquote{fair-coincidence} assumption~\cite{Larsson2014Bell}, i.e.
that the outcome statistics is not skewed from these losses. According
to~\cite{Larsson2014Bell}, this fair-coincidence assumption appears to have been
implicitly made in at least every experiment before 2015.

This paper formally derives bounds for the coincidence probability so that a
violation of the PBC inequality can be performed without the fair-coincidence
assumption. Therefore, if the coincidence probability is high enough we can
eliminate the coincidence-time loophole. It should be noted that switching to
the generalized PBC inequality comes at a cost. As shown
by~\cite{Cabello2009Minimum}, the minimum required detection efficiency is
strictly increasing with $N$. Similarly, the PBC inequality in general has
higher requirements for the coincidence probability than the CHSH inequality.

We begin by formally defining the coincidence probability for PBC-based
experiments, followed by a sufficient condition for eliminating the
coincidence-time loophole. Then, we show
that our bound is tight by constructing a classical model that precisely
reproduces the output statistics whenever the losses exceed the bound. Finally,
we conclude that our results reduce to the special case of
CHSH~\cite{Larsson2004Bell} by choosing $N=2$ and compare with the
corresponding limits on detection efficiency~\cite{Cabello2009Minimum}.

\section{The coincidence-time loophole}

We use the symbol $\lambda$ for the hidden variable, which can take values in a
sample space $\Lambda$, that in turn is the domain of random variables
$A(\lambda)$ and $B(\lambda)$ denoting the measurement outcomes at Alice's and
Bob's measurement stations, respectively. We further assume that the space
$\Lambda$ has a probability measure $P$ which induces an expectation value $E$
in the standard way. We now give the formal definition of the PBC
inequality~\cite{Pearle1970Hidden,Braunstein1990Wringing}:
\begin{theorem}[Pearle-Braunstein-Caves]\label{thm:pbc}
    Let $N$ be an integer $\geq 2$ and $i$, $j$, and $k$ be integers between $1$
    and $2N$, and assume the following three prerequisites to hold almost
    everywhere:
    \begin{enumerate}[(i)]
        \item Realism: Measurement results can be described by probability
            theory, using two families of random variables $A_{i,j}$,$B_{i,j}$,
            e.g.,
            \begin{equation}
                \begin{split}
                    A_{i,j}: \Lambda \to & V \\
                    \lambda \mapsto & A_{i,j}(\lambda)\\
                    B_{i,j}: \Lambda \to & V \\
                    \lambda \mapsto & B_{i,j}(\lambda)\\
                \end{split}
            \end{equation}
        \item Locality: A measurement result should be independent of the remote
            setting, e.g., for $k\neq i$, $l\neq j$ we have
            \begin{equation}
                \begin{split}
                    A_{i,j}(\lambda)=& A_{i,l}(\lambda)\\
                    B_{i,j}(\lambda)=& B_{k,j}(\lambda)
                \end{split}
            \end{equation}
        \item Measurement result restriction: The results may only range from
            $-1$ to $+1$,
            \begin{equation}
                V=\{x\in \mathbb R; -1\leq x\leq +1 \}.
            \end{equation}
    \end{enumerate}
    Then, by defining
    \begin{equation}\label{eqn:pbc-value}
        \begin{split}
            S_{N}\defeq & \big|E(A_1B_1)+E(A_2B_1)\big|\\
            +&\big|E(A_2B_2)+E(A_3B_2)\big|+\cdots\\
            +&\big|E(A_{N}B_{N})-E(A_{1}B_{N})\big|
        \end{split}
    \end{equation}
    we get
    \begin{equation}\label[inequality]{eqn:pbc}
        \begin{split}
            S_{N}\leq 2N-2
        \end{split}
    \end{equation}
\end{theorem}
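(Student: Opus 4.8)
The plan is to reduce the whole statement to one elementary inequality about real numbers in $[-1,1]$, invoking nothing beyond linearity and monotonicity of the expectation $E$. First I would use Locality~(ii) to drop the second index: since $A_{i,j}(\lambda)$ does not depend on $j$ (and $B_{i,j}(\lambda)$ not on $i$), I write $A_i(\lambda)$ for the common value of the $A_{i,j}(\lambda)$ and $B_j(\lambda)$ for that of the $B_{i,j}(\lambda)$ --- this is exactly what gives the single-index correlators in \cref{eqn:pbc-value} their meaning. By linearity of $E$, the $n$-th grouped term then reads $|E((A_n+A_{n+1})B_n)|$ for $n=1,\dots,N-1$, and the last reads $|E((A_N-A_1)B_N)|$.

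Next I would bound each grouped term before integrating. The triangle inequality for $E$ gives $|E(X)|\le E(|X|)$, and Measurement result restriction~(iii) gives $|B_n(\lambda)|\le 1$ almost everywhere; together these yield $|E((A_n+A_{n+1})B_n)|\le E(|A_n+A_{n+1}|)$ for $n<N$, and $|E((A_N-A_1)B_N)|\le E(|A_N-A_1|)$. Adding these $N$ bounds and using linearity once more,
\[ S_N \le E\!\left(\sum_{n=1}^{N-1}|A_n+A_{n+1}| + |A_N-A_1|\right). \]

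The heart of the matter is then the deterministic inequality: for all $x_1,\dots,x_N\in[-1,1]$,
\[ \sum_{n=1}^{N-1}|x_n+x_{n+1}| + |x_N-x_1| \le 2N-2. \]
I would prove this from the identity $|x_n+x_{n+1}|+|x_n-x_{n+1}| = 2\max(|x_n|,|x_{n+1}|)\le 2$, which gives $\sum_{n=1}^{N-1}|x_n+x_{n+1}| \le 2(N-1) - \sum_{n=1}^{N-1}|x_n-x_{n+1}|$, and from the telescoping triangle inequality $\sum_{n=1}^{N-1}|x_n-x_{n+1}| \ge |x_1-x_N|$, which absorbs the leftover term. (Equivalently, the left-hand side is convex in each $x_i$, so its maximum over $[-1,1]^N$ is attained at a vertex $x_i\in\{\pm1\}$, where the claim reduces to the observation that the $N-1$ consecutive pairs cannot all agree while $x_1\ne x_N$.) Taking $x_i=A_i(\lambda)$ makes the integrand above at most $2N-2$ almost everywhere, and since $P$ is a probability measure we conclude $S_N\le 2N-2$, which is \cref{eqn:pbc}.

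I expect the sole real obstacle to be the deterministic inequality of the last paragraph; the rest is bookkeeping with linearity and $|E(\cdot)|\le E(|\cdot|)$. It is also worth flagging that prerequisite~(iii) is used twice --- once to discard the $|B_n|$ factors and once to confine the $A_i$ to the box $[-1,1]^N$ on which that inequality is sharp --- so a PBC violation genuinely requires outcomes bounded by $1$, not merely bounded.
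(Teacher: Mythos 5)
Your proposal is correct and is essentially the proof the paper has in mind: the paper only sketches it as \enquote{simple algebraic manipulations, adding of integrals and an application of the triangle inequality} and defers to the cited references, and your argument---factoring out $B_n$, moving the absolute value inside the expectation, and bounding the integrand pointwise via the identity $|x_n+x_{n+1}|+|x_n-x_{n+1}|=2\max(|x_n|,|x_{n+1}|)\le 2$ together with the telescoping triangle inequality---is a correct, fully written-out instance of exactly that. The deterministic bound $\sum_{n=1}^{N-1}|x_n+x_{n+1}|+|x_N-x_1|\le 2N-2$ on $[-1,1]^N$ checks out, as does the convexity/vertex alternative you mention.
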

The proof consists of simple algebraic manipulations, adding of integrals and an
application of the triangle
inequality~\cite{Pearle1970Hidden,Braunstein1990Wringing}.

The right-hand value of \cref{eqn:pbc} is the highest value $S_N$
can attain with a local realist model. Compare this with the prediction of
quantum mechanics~\cite{Pearle1970Hidden,Braunstein1990Wringing}:
\begin{equation}\label{eqn:pbc-qm}
    S_N=2N\cos\left(\frac{\pi}{2N}\right).
\end{equation}
Note that $2N\cos\left(\pi/2N\right)>2N-2$ which, in the spirit of
\textcite{Bell1964Einstein}, shows that the outcomes of a quantum-mechanical
experiment cannot be explained in local realist terms.

Computing the Bell value requires computing the correlation between outcomes at
remote stations. Importantly, data must be gathered in pairs, so that products
such as $A_1B_1$ can be computed (see \cref{eqn:pbc-value}). Experimentally,
this is done by letting a source device generate pairs of (possibly entangled)
particles that are sent to Alice and Bob for measurement. Detectors at either
end record the measurement outcomes, and as previously mentioned there will is
always be variations on the detection times due to experimental effects. As a
consequence, it is not always obvious which detector clicks correspond to which
pairs of particles.

After a number of trials, Alice and Bob must determine in which trial if they have
coincidence (simultaneous clicks at Alice and Bob), a single event (only one
party gets a detection) or no detection at all. This is especially pronounced
if the experimental setup uses down-conversion where a continuous-wave
laser pumps a nonlinear crystal in order to spontaneously create pairs of
entangled photons. In that case the emission time is uniformly distributed over
the duration of the experiment so it becomes a probabilistic process that
further complicates pair
detection.

A typical strategy used in quantum optics experiments to reduce the influence of
noise in a Bell experiment is to have a time window of size $\Delta T$ around,
for example, Alice's detection
event~\cite{Giustina2013Bell,Christensen2015Analysis}. If a detection event has
occurred at Bob's side within this window it is counted as a coincident pair.
This is a non-local strategy as it involves comparing data between remote
stations and is used in many experiments.

For the experimenter it is tempting to choose a small $\Delta T$ since it
filters out noise and therefore increases the measured Bell value. At this
point, there is apparently no immediately obvious drawback of picking a very
small $\Delta T$. However, rejecting experimental data in a Bell experiment
modifies the underlying statistical ensemble and it is
known~\cite{Larsson1998Bell} to lead to inflated Bell values and a false
violation of the Bell inequality. This is a so-called loophole that can arise in
Bell testing, and many such loopholes have been studied in recent years
(see~\cite{Larsson2014Loopholes} for a review).

A coincidence window that is too small discards some truly coincident events as
noise. Therefore, the Bell value measurement only occurs on a subset of the
statistical ensemble which means a number of events are not accounted for when
the Bell value is computed. While the Bell value $S_N$ from \cref{thm:pbc} is in
violation of a Bell inequality, this violation might be a mirage. Specifically,
the loophole that arises from choosing a small $\Delta T$ is called the
\emph{coincidence-time loophole} and has previously been
studied~\cite{Larsson2004Bell} for the special CHSH case $N=2$. In addition,
more recent works~\cite{Larsson2014Bell} derive similar bounds for the
CH~\cite{Clauser1974Experimental} inequality.

We generalize the results previously obtained for the special case of CHSH
by deriving tight bounds for the coincidence-time
loophole in the PBC inequality (\cref{thm:pbc}) for all $N\geq 2$. This
contribution will be useful for future experiments investigating, among others,
Franson-based QKD\@. Other works~\cite{Cabello2009Minimum} have studied the
effects of reduced detector efficiency for the full PBC inequality, which in
turn is a generalization of an older result~\cite{Larsson1998Bell} that only
discussed the special CHSH case.

For the rest of this paper, Alice and Bob perform measurements on some
underlying, possibly quantum, system. Their measurements are chosen from
$\{A_i\}$ and $\{B_j\}$, respectively, i.e.\ sets of $N$ measurement settings
each. As
discussed by \textcite{Larsson2004Bell}, Alice's and Bob's choice of measurement
settings might influence whether an event is coincident or not. Following the
formalism in~\cite{Larsson1998Bell} we will therefore model non-coincident
settings $\lambda$ as subsets of $\Lambda$ where the random variables
$A_i(\lambda)$ and $B_i(\lambda)$ are undefined. We must therefore modify the
expectation values in \cref{eqn:pbc-value} to be conditioned on
coincidence in order for $S_N$ to be well-defined (see \cref{eqn:condition}).
The time of arrival at Alice's and Bob's measurement stations is defined as
\begin{equation}\label{eqn:time}
    \begin{split}
        T_{i,j}: \Lambda \to & \mathbb R \\
        \lambda \mapsto & T_{i,j}(\lambda)\\
        T'_{i,j}: \Lambda \to & \mathbb R \\
        \lambda \mapsto & T'_{i,j}(\lambda),
    \end{split}
\end{equation}
respectively. Since this notation will become cumbersome, we will
introduce a simplification. Let ${{\{b_i\}}_{1}}^{2N}=\{1,1,2,2,\ldots,N,N\}$ and
$a_i$ rotated one step so that ${{\{a_i\}}_{1}}^{2N}=\{1,2,2,\ldots,N,N,1\}$. Then
${{\{(a_i,b_i)\}}_{1}}^{2N}=\{(1,1)$,$(2,1)$,$(2,2)$,$(3,2)$,$\ldots$,
$(N,N)$,$(1,N)\}$. This allows us to
define subsets of $\Lambda$ as the sets on which Alice's and Bob's measurement
settings give coincident outcomes. For $1\leq i \leq 2N$ we have
\begin{equation}
    \Lambda_i \defeq \{\lambda: |T_{a_i,b_i}(\lambda)-T'_{a_i,b_i}(\lambda)| <
    \Delta T \}.
\end{equation}
We can now calculate the probability of coincidence as
\begin{equation}
    \gamma_N \defeq \inf_i P \left(\Lambda_i\right).
\end{equation}
Finally, for $1\leq i\leq 2N$ we have the conditional expectation defined as
\begin{equation}
    \label{eqn:condition}
    E(X_i|\Lambda_i) \defeq
    \int_{\Lambda_i}X_i(\lambda)dP(\lambda)
\end{equation}
where we use the convenient shorthand
\begin{equation}\label{eqn:shorthand}
    X_{i}\defeq
    A_{a_i}B_{b_i}
\end{equation}
for the product of the outcomes of Alice and Bob.

\section{The PBC inequality with coincidence
    probability}
We can now re-state \cref{thm:pbc} in terms of coincidence probability.
\begin{theorem}[PBC with coincidence probability]\label{thm:main}
    Let $N$ be an integer $\geq 2$ and $i$, $j$, and $k$ be integers between $1$
    and $2N$, and assume the prerequisites $(i)-(iii)$ of
    \cref{thm:pbc} hold almost everywhere together with
    \begin{enumerate}[(i)]
            \setcounter{enumi}{3}
        \item  Coincident events: Correlations are obtained on
            $\Lambda_i \subset \Lambda$.
    \end{enumerate}
    Then by defining
    \begin{equation}\label{eqn:pbc-coincident}
        \begin{split}
            S_{C,N}\defeq \big|E(X_1|\Lambda_1)+E(X_2|\Lambda_2) \big|
            +\cdots \\ +
            \big|E(X_{2N-1}|\Lambda_{2N-1})-E(X_{2N}|\Lambda_{2N})\big|
        \end{split}
    \end{equation}
    we get
    \begin{equation}\label[inequality]{eqn:main}
        S_{C,N}\leq \frac{4N-2}{\gamma_N} -2N
    \end{equation}
\end{theorem}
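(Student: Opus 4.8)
The plan is to compare the conditional quantity $S_{C,N}$ with the ordinary Bell quantity of \cref{thm:pbc}, but evaluated only on the \emph{common} coincidence set $\Lambda^{\star}\defeq\bigcap_{i=1}^{2N}\Lambda_i$, and to charge the difference to the lost probability. On $\Lambda^{\star}$ all outcome variables are simultaneously defined: each index $k\in\{1,\dots,N\}$ occurs among the $a_i$ and among the $b_i$, so $A_k$ and $B_k$ arise as some $A_{a_i}$, $B_{b_j}$ and are defined on $\Lambda_i,\Lambda_j\supseteq\Lambda^{\star}$, while locality $(ii)$ makes the restriction to $\Lambda^{\star}$ independent of the chosen $i,j$; thus we obtain honest functions $A_1,\dots,A_N,B_1,\dots,B_N:\Lambda^{\star}\to[-1,1]$.

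The first move is to write $E(X_i\mid\Lambda_i)=\tfrac1{P(\Lambda_i)}\int_{\Lambda_i}X_i\,dP$, split the integral as $\int_{\Lambda^{\star}}+\int_{\Lambda_i\setminus\Lambda^{\star}}$, and substitute $\tfrac1{P(\Lambda_i)}=\tfrac1{\gamma_N}-\bigl(\tfrac1{\gamma_N}-\tfrac1{P(\Lambda_i)}\bigr)$. Using only $|X_i|\le1$ from $(iii)$ and $P(\Lambda_i)\ge\gamma_N$, this puts each term in the form $E(X_i\mid\Lambda_i)=\tfrac1{\gamma_N}\int_{\Lambda^{\star}}X_i\,dP+R_i$ with $|R_i|\le\tfrac1{\gamma_N}\bigl(2P(\Lambda_i)-P(\Lambda^{\star})-\gamma_N\bigr)$. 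Plugging into \cref{eqn:pbc-coincident} and regrouping the $2N$ summands into the $N$ absolute values, the ``main'' contribution is $\tfrac1{\gamma_N}$ times $\sum_{k=1}^{N-1}\bigl|\int_{\Lambda^{\star}}(A_k+A_{k+1})B_k\,dP\bigr|+\bigl|\int_{\Lambda^{\star}}(A_N-A_1)B_N\,dP\bigr|$; pulling $|B_k|\le1$ out of each integral and then using the elementary pointwise inequality underlying \cref{thm:pbc}, namely $\sum_{k=1}^{N-1}|A_k+A_{k+1}|+|A_N-A_1|\le2N-2$ at every point of $\Lambda^{\star}$, bounds this by $(2N-2)P(\Lambda^{\star})/\gamma_N$.

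Summing the remainders gives $\sum_{i=1}^{2N}|R_i|\le\tfrac1{\gamma_N}\bigl(2\sum_iP(\Lambda_i)-2N\,P(\Lambda^{\star})-2N\gamma_N\bigr)$, and adding this to the main contribution the $P(\Lambda^{\star})$ coefficients collapse, via $(2N-2)-2N=-2$, to $S_{C,N}\le\tfrac2{\gamma_N}\bigl(\sum_{i=1}^{2N}P(\Lambda_i)-P(\Lambda^{\star})\bigr)-2N$. The step I expect to be the real crux — the only part that is not bookkeeping — is to see that this surviving quantity is pinned down by a single elementary estimate, $\sum_{i=1}^{2N}P(\Lambda_i)-P\bigl(\bigcap_{i=1}^{2N}\Lambda_i\bigr)\le2N-1$, which follows at once from $1-P(\bigcap_i\Lambda_i)=P\bigl(\bigcup_i(\Lambda\setminus\Lambda_i)\bigr)\le\sum_i\bigl(1-P(\Lambda_i)\bigr)$ and needs no hypothesis on how the sets $\Lambda_i$ overlap (it also subsumes the degenerate case $P(\Lambda^{\star})=0$). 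Inserting it gives $S_{C,N}\le\tfrac{2(2N-1)}{\gamma_N}-2N=\tfrac{4N-2}{\gamma_N}-2N$, which is \cref{eqn:main}. The rest is routine: the index arithmetic of the $\{a_i\},\{b_i\}$ sequences, the ``almost everywhere'' qualifiers, and measurability of $A_k,B_k$ on $\Lambda^{\star}$.
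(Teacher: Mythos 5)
Your proof is correct, and it reaches \cref{eqn:main} by a genuinely different decomposition than the paper's. The paper introduces the normalized overlap $\delta=\inf_i P(\Lambda_I)/P(\Lambda_i)$ with $\Lambda_I=\bigcap_j\Lambda_j$ (your $\Lambda^{\star}$), lower-bounds it by $2N-(2N-1)/\gamma_N$ via a \emph{conditional} Bonferroni estimate applied pairwise ($P(\Lambda_j|\Lambda_i)\geq 2-1/\gamma_N$), and then writes each term as $\delta E(X_i|\Lambda_I)$ plus a residual that a separate lemma bounds by $1-\delta$; combining gives $S_{C,N}\leq\delta S_N+2N(1-\delta)=2N-2\delta$ and hence the result. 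You instead peel off $\tfrac{1}{\gamma_N}\int_{\Lambda^{\star}}X_i\,dP$ (unnormalized, with the worst-case denominator $\gamma_N$ substituted up front), bound the main part by the pointwise inequality $\sum_{k=1}^{N-1}|A_k+A_{k+1}|+|A_N-A_1|\leq 2N-2$ under a single integral over $\Lambda^{\star}$, carry explicit remainders $|R_i|\leq(2P(\Lambda_i)-P(\Lambda^{\star})-\gamma_N)/\gamma_N$, and invoke Boole's inequality exactly once, unconditionally, as $\sum_iP(\Lambda_i)-P(\Lambda^{\star})\leq 2N-1$. I checked the arithmetic: the remainder bound, the collapse of the $P(\Lambda^{\star})$ coefficients via $(2N-2)-2N=-2$, and the final substitution are all right. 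What your route buys is the elimination of the paper's \cref{lemma:one-minus-delta} and of the separate dismissal of the $\delta=0$ (empty $\Lambda_I$) case, since you never divide by $P(\Lambda^{\star})$; it also retains the individual $P(\Lambda_i)$ until the last step, giving the slightly sharper intermediate bound $S_{C,N}\leq\tfrac{2}{\gamma_N}(\sum_iP(\Lambda_i)-P(\Lambda^{\star}))-2N$. What the paper's route buys is the explicit quantity $\delta$, which is conceptually meaningful (it measures how much of each coincidence ensemble survives intersection) and keeps the argument parallel to the earlier CHSH and CH treatments it generalizes. One small remark: you read $E(X_i|\Lambda_i)$ as the normalized conditional expectation $\tfrac{1}{P(\Lambda_i)}\int_{\Lambda_i}X_i\,dP$, whereas \cref{eqn:condition} as printed omits the normalization; your reading is the intended one, since the paper's own lemma uses the law of total expectation in normalized form.
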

The remainder of this section is dedicated to proving this result. Note that
while the proof of \cref{thm:pbc} consists of adding expectation
values, this cannot be done for \cref{thm:main} since $\Lambda_i\neq \Lambda_j$
in general. Again, the ensemble changes with Alice's and Bob's measurement
settings, so the ensemble that \cref{thm:pbc} implicitly acts upon
is really
\begin{equation}
    \Lambda_I\defeq \bigcap_{i=1}^{2N} \Lambda_j,
\end{equation}
i.e.\ the intersection of all coincident subspaces of $\Lambda$. In other words,
prerequisites $(i)-(iii)$ yield
\begin{equation}\label{eqn:pbc-lambda-i}
    \begin{split}
        & \big|E(A_1B_1|\Lambda_I)+E(A_2B_1|\Lambda_I)
        \big|+\cdots \\ +
        & \big|E(A_{N}B_{N}|\Lambda_I)-E(A_{1}B_{N}|\Lambda_I)\big|
        \leq 2N-2
    \end{split}
\end{equation}
which again is a more precise re-statement of \cref{thm:pbc} where
we stress the conditional part. An experiment, however, will give us results on
the form $E(X_i|\Lambda_i)$, i.e. $E(X_i|\Lambda_I)$ is unavailable to the
experimenter. We therefore need to bridge the gap between experimental data and
\Cref{thm:main}, so we define following quantity which will act as a stepping
stone:
\begin{equation}\label{eqn:delta}
    \delta\defeq \inf_i \frac{P\Big(\bigcap_{j=1}^{2N} \Lambda_j \Big)}
    {P(\Lambda_i)
    } = \inf_i P\bigg( \bigcap_{j\neq i} \Lambda_j \bigg\vert \Lambda_i \bigg)
\end{equation}
Note that it is possible for the ensemble $\Lambda_I$ to be empty, but only when
$\delta=0$ and then \cref{eqn:main} is trivial. We can therefore assume
$\delta>0$ for the rest of the proof and our goal now is to give a lower bound
to $\delta$ in terms of the coincidence probability $\gamma_N$. We fix $i$ and
apply Boole's inequality:
\begin{equation}\label[inequality]{eqn:single-p-terms}
    P\bigg(\bigcap_{j\neq i} \Lambda_j \bigg\vert \Lambda_i \bigg)
    \geq2N-2+
    \sum_{j\neq i}
    P(\Lambda_j|\Lambda_i)
\end{equation}
and rewrite the summation terms:
\begin{equation}
    \label[inequality]{eqn:before-gamma}
    \begin{split}
        & P\left(\Lambda_j\middle \vert \Lambda_i\right)=
        \frac{P\left(\Lambda_i\cap\Lambda_j\right)}
        {P\left(\Lambda_i\right)}\\
        & = \frac{ P\left(\Lambda_i\right)+
            P\left(\Lambda_j\right)-
        P\left(\Lambda_i\cup\Lambda_j\right)}
        {P\left(\Lambda_i\right)} \\
        & \geq 1+\frac{P(\Lambda_j)-1}{P(\Lambda_i)}\geq
         1+\frac{\gamma_N-1}{\gamma_N} \\
         & =2-\frac{1}{\gamma_N}.
    \end{split}
\end{equation}
Inserting \cref{eqn:before-gamma} into \cref{eqn:single-p-terms} we get
\begin{equation}
    \label{eqn:p-estimate}
    \begin{split}
        &P\bigg(\bigcap_{j\neq i} \Lambda_j \bigg \vert \Lambda_i \bigg)
        \\
        &\geq
        2N-2+
        (2N-1)\left( 2-\frac{1}{\gamma_N} \right)\\
        & =  2N-\frac{2N-1}{\gamma_N},
    \end{split}
\end{equation}
and as \cref{eqn:before-gamma} is independent of $i$, inserting into
\cref{eqn:delta} gives
\begin{equation}
    \delta \geq 2N-\frac{2N-1}{\gamma_N},
    \label[inequality]{eqn:delta-in-gamma}
\end{equation}
and this is the desired lower bound. We now bound $S_{C,N}$ from above by adding
and subtracting $\delta E(X_i|\Lambda_I)$ in every term before applying the
triangle inequality and use \cref{eqn:pbc-lambda-i}:
\begin{equation}\label[inequality]{eqn:add-subtract}
    \begin{split}
        &S_{C,N}=
         \Big|E(X_1|\Lambda_1)-
        \delta E(X_1|\Lambda_I)\\
        &+\delta E(X_1|\Lambda_I)+
        E(X_2|\Lambda_2) \\
        &- \delta E(X_2|\Lambda_I)+
        \delta E(X_2|\Lambda_I)\Big| + \cdots\\
        &+ \Big|E(X_{2N-1}|\Lambda_{2N-1})-
        \delta E(X_{2N-1}|\Lambda_I)\\
        &+ \delta E(X_{2N-1}|\Lambda_I)-
        E(X_{2N}|\Lambda_{2N}) \\
        &+ \delta E(X_{2N}|\Lambda_I)-
        \delta E(X_{2N}|\Lambda_I)	\Big|
        \\
         \leq & \delta \Big (\big |E(X_1|\Lambda_I)+
            E(X_2|\Lambda_I)	| +  \cdots\\
            &+    |E(X_{2N-1}|\Lambda_I)-
        E(X_{2N}|\Lambda_I)	\big|\Big) \\
        &+  \sum_{i=1}^{2N}
        \big|E(X_i|\Lambda_i)-
        \delta E(X_i|\Lambda_I)\big| \\
         \leq & \delta  S_N
        +\sum_{i=1}^{2N}
        \big|E(X_i|\Lambda_i)-
        \delta E(X_i|\Lambda_I)\big|
    \end{split}
\end{equation}
To give an upper bound to the last sum, we need the following lemma:
\begin{lemma}\label{lemma:one-minus-delta}
    For $1\leq i \leq 2N$ and $0\leq \delta \leq 1$ we have the following
    inequality:
    \begin{align}
        \big|E(X_{i}|\Lambda_i)-
        \delta E(X_i|\Lambda_I)\big|\leq 1-\delta
    \end{align}
\end{lemma}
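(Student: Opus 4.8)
The plan is to write $E(X_i\mid\Lambda_i)$ as a convex combination of a contribution coming from the global coincidence set $\Lambda_I=\bigcap_{j=1}^{2N}\Lambda_j$ and one coming from its complement inside $\Lambda_i$, and then to exploit that $\delta$ was defined precisely so that the weight attached to the $\Lambda_I$-part is never smaller than $\delta$.

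First I would record the two facts that drive the estimate. By \cref{eqn:shorthand} and prerequisite (iii) of \cref{thm:pbc}, $X_i=A_{a_i}B_{b_i}$ takes values in $[-1,1]$, so every conditional expectation of $X_i$ over an event of positive probability has modulus at most $1$. Also $\Lambda_I\subseteq\Lambda_i$, so $\alpha\defeq P(\Lambda_I\mid\Lambda_i)=P(\Lambda_I)/P(\Lambda_i)\in[0,1]$ is exactly the $i$-th of the ratios whose infimum is $\delta$ in \cref{eqn:delta}; hence $\alpha\geq\delta$. (If $P(\Lambda_i)=0$ there is nothing to prove, and we may in any case assume $\delta>0$, so $P(\Lambda_i)\geq P(\Lambda_I)>0$.)

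Next, partitioning $\Lambda_i$ into $\Lambda_I$ and $\Lambda_i\setminus\Lambda_I$, the law of total expectation gives
\[
  E(X_i\mid\Lambda_i)=\alpha\,E(X_i\mid\Lambda_I)+(1-\alpha)\,E(X_i\mid\Lambda_i\setminus\Lambda_I),
\]
and subtracting $\delta\,E(X_i\mid\Lambda_I)$ and regrouping,
\[
  E(X_i\mid\Lambda_i)-\delta\,E(X_i\mid\Lambda_I)=(\alpha-\delta)\,E(X_i\mid\Lambda_I)+(1-\alpha)\,E(X_i\mid\Lambda_i\setminus\Lambda_I).
\]
Here both coefficients $\alpha-\delta$ and $1-\alpha$ are nonnegative and sum to $1-\delta$, so applying the triangle inequality to the last display and bounding the two conditional expectations of $X_i$ by $1$ yields
\[
  \bigl|E(X_i\mid\Lambda_i)-\delta\,E(X_i\mid\Lambda_I)\bigr|\leq(\alpha-\delta)+(1-\alpha)=1-\delta,
\]
which is the claim.

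I do not expect a genuine obstacle; once the decomposition is in place the estimate is essentially one line. The only subtle point is the sign of $\alpha-\delta$: were it negative, the triangle inequality would only give $(\delta-\alpha)+(1-\alpha)=1+\delta-2\alpha>1-\delta$ and the bound would fail. That $\alpha-\delta\geq0$ always holds is exactly why $\delta$ in \cref{eqn:delta} is taken to be the \emph{infimum} of the ratios $P(\Lambda_I)/P(\Lambda_i)$ and not some larger quantity.
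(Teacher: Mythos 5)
Your proof is correct and follows essentially the same route as the paper: split $\Lambda_i$ into $\Lambda_I$ and $\Lambda_i\setminus\Lambda_I$, expand $E(X_i\mid\Lambda_i)$ by total expectation, apply the triangle inequality, and bound each conditional expectation by $1$ using $|X_i|\leq 1$. You additionally make explicit the point the paper leaves implicit --- that $P(\Lambda_I\mid\Lambda_i)\geq\delta$ by the infimum in \cref{eqn:delta}, without which the coefficient $\alpha-\delta$ could be negative and the bound would fail --- which is a worthwhile clarification but not a different argument.
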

\begin{proof}
    It is clear that $\Lambda_I\subset \Lambda_i$. We can therefore split
    $\Lambda_i$ in two disjoint sets: $\Lambda_*\defeq \Lambda_i\setminus
    \Lambda_I$ and $\Lambda_I$. It follows that $\Lambda_I\cup
    \Lambda_*=\Lambda_i$ and we have
    \begin{equation}
        \begin{split}
            \big| & E(X_i|\Lambda_i)-
            \delta E(X_i|\Lambda_I)\big| \\
            \leq &  \big| P(\Lambda_*|\Lambda_i)
            E(X_i|\Lambda_{*})\big| \\
            & + \big| P(\Lambda_I|\Lambda_i)
            E(X_i|\Lambda_I) -\delta E(X_i|\Lambda_I)\big|  \\
            \leq &   P(\Lambda_*|\Lambda_i)
            E(|X_i|\big|\Lambda_*) \\
            &+ \Big(P(\Lambda_I|\Lambda_i)-\delta\Big)
            E(|X_i|\big|\Lambda_I ) \\
            \leq  & P(\Lambda_*|\Lambda_i) +
            P(\Lambda_I|\Lambda_i)-\delta = 1-\delta
        \end{split}
    \end{equation}
\end{proof}
\Cref{lemma:one-minus-delta} gives us
\begin{equation}\label[inequality]{eqn:sum-delta}
    \begin{split}
        \sum_{i=1}^{2N}\big|E(X_i|\Lambda_i)-
        \delta E(X_i|\Lambda_I)\big|  \leq2N(1-\delta)
    \end{split}
\end{equation}
The final step is to use
\cref{eqn:sum-delta,eqn:pbc,eqn:delta-in-gamma} on
\cref{eqn:add-subtract} which proves the desired result.

\section{Minimum coincidence probability}
The right-hand-side of \cref{eqn:main} increases as $\gamma_N$ goes down so
there exists a unique $\gamma_N$ so the bound on $S_{C,N}$ coincides with the
quantum-mechanical prediction in \cref{eqn:pbc-qm}. We define this
\emph{critical} coincidence probability as $\gamma_{\text{crit},N}$ and find it by
solving the following equation:
\begin{equation}
    2N\cos\left(\frac{\pi}{2N}\right)= \frac{4N-2}{\gamma_{\text{crit},N}} -2N
\end{equation}
and get
\begin{equation}\label{eqn:crit}
    \gamma_{\text{crit},N}=
    \frac{2N-1}{2N}\left(1+\tan^2\left( \frac{\pi}{4N} \right)\right).
\end{equation}
What remains to show is that
for all $\gamma_N\leq \gamma_{\text{crit},N}$ there exists a
local hidden variable (LHV) model that produces a $S_{C,N}$ that mimics
the predictions of quantum theory. Formally, we have the following theorem:
\begin{theorem}\label{thm:sufficiency}
    Let $N$ be an integer $\geq 2$. For every
    $\gamma_N\leq\gamma_{\text{crit},N}$ is is possible to construct an LHV
    model fulfilling the prerequisites (i) --- (iv) of \cref{eqn:main} so that
    \begin{equation}
        S_{C,N}=2N\cos\left(\frac{\pi}{2N}\right).
    \end{equation}
\end{theorem}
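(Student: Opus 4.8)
The plan is to construct, for each $N\ge 2$ and each $\gamma_N\le\gamma_{\text{crit},N}$, an explicit local hidden-variable model that attains $S_{C,N}=2N\cos(\pi/2N)$ while having coincidence probability exactly $\gamma_N$ (or at least $\le\gamma_N$ — by decreasing $\gamma_N$ we only make the task easier, so it suffices to handle $\gamma_N=\gamma_{\text{crit},N}$ and then note the bound is monotone). The model will be a mixture of two ingredients on a partitioned sample space $\Lambda$: a ``local-realistic core'' that deterministically reproduces the PBC-extremal correlations on the coincident subsets, and a ``loss mechanism'' governed by the arrival-time functions $T_{i,j},T'_{i,j}$ that deliberately renders events non-coincident in a setting-dependent way so as to skew the conditional expectations. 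Concretely, I would let $\lambda$ encode (a) which of the $2N$ correlation terms is being ``sabotaged'' and (b) a sign pattern; for each term index $i$ the strategy is chosen so that on $\Lambda_i$ (the events surviving the time-window cut for setting pair $(a_i,b_i)$) the conditional correlation $E(X_i|\Lambda_i)$ is pushed to the value $\pm\cos(\pi/2N)$ needed to make every one of the $N$ absolute-value brackets in \cref{eqn:pbc-coincident} equal to $2\cos(\pi/2N)$, summing to $2N\cos(\pi/2N)$.

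The key steps, in order, are: (1) Fix the target: each bracket must equal $2\cos(\pi/2N)$, so I need $E(X_{2m-1}|\Lambda_{2m-1})=E(X_{2m}|\Lambda_{2m})=\cos(\pi/2N)$ for the ``$+$'' brackets and the appropriate signed version for the last ``$-$'' bracket. (2) Work backwards from \cref{eqn:before-gamma,eqn:delta}: the chain of inequalities used to prove \cref{thm:main} is tight precisely when $P(\Lambda_i)=\gamma_N$ for all $i$, when the pairwise unions $\Lambda_i\cup\Lambda_j$ are the whole space (i.e.\ the $\Lambda_j^c$ are disjoint for $j\ne i$), and when \cref{lemma:one-minus-delta} is saturated, meaning on $\Lambda_*=\Lambda_i\setminus\Lambda_I$ the variable $X_i$ is $\pm1$ with the ``right'' sign and on $\Lambda_I$ the core already gives the extremal PBC value $E(X_i|\Lambda_I)$ with $S_N=2N-2$. (3) Build $\Lambda$ as $\Lambda_I$ together with $2N$ disjoint ``defect'' pieces $D_1,\dots,D_{2N}$, where $D_j=\bigcap_{k\ne j}\Lambda_k\setminus\Lambda_j$ is outside exactly $\Lambda_j$; choose $P(\Lambda_I)=\delta_{\text{crit}}\gamma_N$ and $P(D_j)=(1-\delta_{\text{crit}})\gamma_N/(1)$ — here $\delta_{\text{crit}}=2N-(2N-1)/\gamma_N$ from \cref{eqn:delta-in-gamma} — so that $P(\Lambda_i)=P(\Lambda_I)+\sum_{j\ne i}P(D_j)=\gamma_N$ for every $i$, then define the time functions $T_{a_i,b_i},T'_{a_i,b_i}$ so that $|T-T'|\ge\Delta T$ exactly on $D_i$. (4) Define $A_{a_i},B_{b_i}$ on $\Lambda_I$ by the standard PBC-extremal deterministic assignment (the one achieving $S_N=2N-2$, built from a ``clock'' variable uniformly distributed so that adjacent settings agree), and on each $D_j$ set the outcomes to the extreme $\pm1$ that maximizes the contribution of the surviving brackets; (5) verify by direct computation that the conditional expectations come out to $\cos(\pi/2N)$ — this reduces to a weighted average of the ``$2N-2$ from the core'' contribution and the ``$\pm1$ from the defects'' contribution with weights $\delta_{\text{crit}}$ and $1-\delta_{\text{crit}}$, which by the choice \cref{eqn:crit} of $\gamma_{\text{crit},N}$ telescopes to exactly the quantum value.

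The main obstacle I anticipate is step (4)–(5): reconciling the \emph{signs}. The bracket structure in \cref{eqn:pbc-coincident} has $2N-1$ plus-signs and one minus-sign, and the PBC core correlations on $\Lambda_I$ must simultaneously (a) satisfy locality — $A_{a_i}$ depends only on Alice's setting $a_i$, not on $b_i$, and likewise for $B$ — which forces the well-known ``one frustrated term'' pattern, and (b) hand off correctly to the defect pieces so that the per-term conditional correlation is uniform across all $2N$ terms. I expect to need the explicit extremal LHV model for PBC (the one where, say, $A_k=\operatorname{sign}(\cos(\text{something}+k\pi/N))$ against a hidden phase, or more simply a deterministic $\pm1$ table with exactly one disagreement around the cycle), and then check that adding the defect contributions does not violate the range restriction (iii) or locality (ii) — in particular that on $D_j$, which lies in all $\Lambda_k$ for $k\ne j$, the assigned outcomes remain consistent with whatever was assigned for the other settings sharing the same Alice- or Bob-index. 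Once the bookkeeping of which $(a_i,b_i)$ pairs share an index is handled correctly, the final numerical identity should fall out of \cref{eqn:crit} by construction, since $\gamma_{\text{crit},N}$ was defined precisely to make the bound in \cref{eqn:main} equal $2N\cos(\pi/2N)$.
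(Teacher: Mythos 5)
Your plan is sound and, at the structural level, arrives at the same object the paper constructs: an explicit LHV model whose sample space splits into an always-coincident core $\Lambda_I$ of measure $\delta\gamma_N$ and $2N$ equal-measure ``defect'' pieces, each excluded from exactly one $\Lambda_i$, so that every inequality in the proof of \cref{thm:main} (Boole, the union bound in \cref{eqn:before-gamma}, and \cref{lemma:one-minus-delta}) is saturated simultaneously. The difference is one of route: the paper simply writes down a concrete geometric model --- hidden variable $(r,\theta)$ uniform on $[0,1]\times[0,2\pi]$, outcome and arrival-time patterns rotated by $a_i\pi/2N$ for Alice and $b_i\pi/2N$ for Bob, window $\Delta T=3/2$ --- computes $P(\Lambda_i)=(2N-1+p)/2N$ and $E(X_i|\Lambda_i)=\pm(2N-1-p)/(2N-1+p)$ directly, and solves for $p$; you instead reverse-engineer the model from the tightness conditions of the upper-bound proof. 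Your route explains \emph{why} the bound is tight; the paper's buys something you have left implicit, namely that the arrival times can be realized \emph{locally} (Alice's time depending only on her setting, Bob's only on his) while still making the $2N$ excluded sets pairwise disjoint --- in your step (3) you write $T_{a_i,b_i},T'_{a_i,b_i}$ with both indices, and declaring $|T-T'|\geq\Delta T$ ``exactly on $D_i$'' needs a locality check that the geometric model passes by construction.

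Two concrete repairs are needed before your plan is a proof. First, the defect normalization: with $P(\Lambda_I)=\delta_{\text{crit}}\gamma_N$ you need $P(D_j)=(1-\delta_{\text{crit}})\gamma_N/(2N-1)=1-\gamma_N$, not $(1-\delta_{\text{crit}})\gamma_N$, or else $P(\Lambda_i)\neq\gamma_N$. Second, and this is exactly the obstacle you flagged in steps (4)--(5), one of your two candidate cores does not exist and the other must be chosen: a deterministic local table with \emph{exactly one} disagreement around the cycle is impossible, since each $A_k$ and $B_k$ enters exactly two of the $2N$ products, so $\prod_i X_i=+1$ and the number of disagreements is even; and the all-$+1$ deterministic core fails at $\gamma_N=\gamma_{\text{crit},N}$ because $E(X_{2N}|\Lambda_I)=+1$ and even putting $X_{2N}=-1$ on every defect only reaches $E(X_{2N}|\Lambda_{2N})=2\delta_{\text{crit}}-1>-\cos(\pi/2N)$. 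The clock (Pearle-type) core is the one that works: it gives $E(X_i|\Lambda_I)=\pm(1-1/N)$, and with the defect assignment on $D_j$ that routes the parity-forced frustrated edge to the excluded index $j$ (and the sign flip to edge $2N$), one gets $E(X_i|\Lambda_i)=\pm(1-\delta/N)$, which equals $\pm\cos(\pi/2N)$ precisely when $\delta=2N\sin^2(\pi/4N)$, i.e.\ when $\gamma_N=\gamma_{\text{crit},N}$ by \cref{eqn:delta-in-gamma} and \cref{eqn:crit}. With that choice made and that computation carried out, your construction closes and matches the paper's result.
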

\begin{figure}
    \centering
    \includegraphics[width=\linewidth]{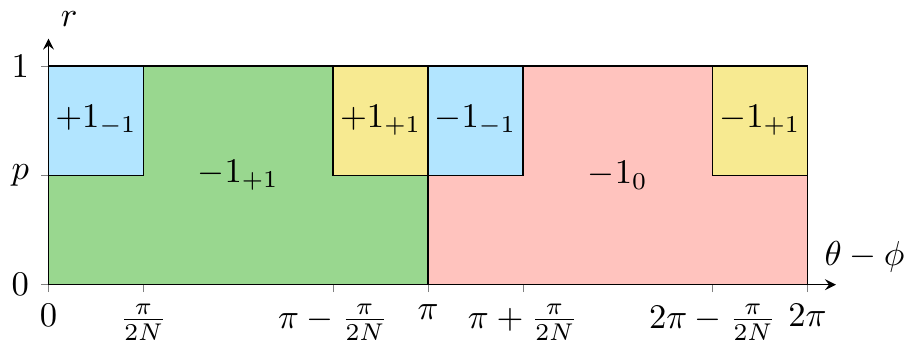}
    \caption{LHV model that gives the outcomes for Alice's and Bob's detectors.\label{fig:lhv}}
\end{figure}
We explicitly prove \Cref{thm:sufficiency} by constructing the LHV model
depicted in
\cref{fig:lhv}. Here, the hidden variable is on the form $(r,\theta)$ and
uniformly distributed over $0\leq r\leq 1$ and $0\leq \theta\leq 2\pi$.
The LHV model defines the random variables $A_i$ and $B_i$ and arrival times
$T_i$ and $T'_i$, where we adapt the shorthand from \cref{eqn:shorthand} to the
definition in \cref{eqn:time}. We choose $\phi$ to be a function of $i$ in the
following way for Alice's detector:
\begin{equation}
    \phi(i)\defeq
    a_i\frac{\pi}{2N}
    \label{eqn:alice-phi}
\end{equation}
and the following way for Bob's detector:
\begin{equation}
    \phi(i)\defeq
    b_i\frac{\pi}{2N}.
    \label{eqn:bob-phi}
\end{equation}
In \cref{fig:lhv}, $\phi$ acts as a shift in the $\theta$ direction (with
wraparound when neccessary). The case $i=1$ is depicted in
\cref{fig:lhv-overlap}, and
by choosing $\Delta T=3/2$ we get coincidence for a time difference of 0 and 1
units (solid background), and non-coincidence for a time difference of two units
(cross-hatched background).
\begin{figure}
    \centering
    \includegraphics[width=\linewidth]{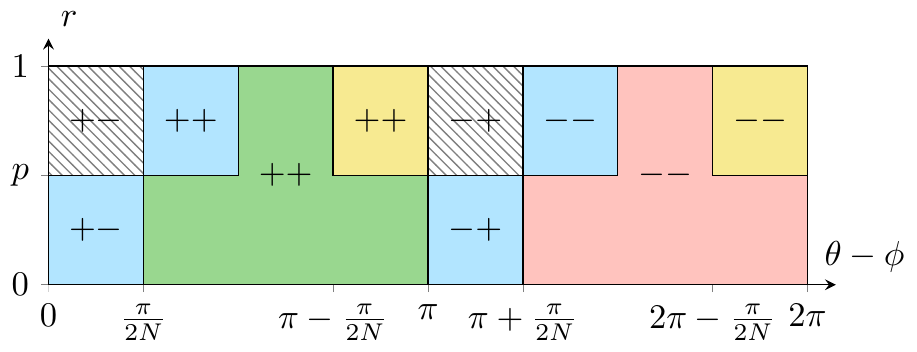}
    \caption{Alice's and Bob's outcome patterns for the case $i=1$. The two
        plus/minus signs show Alice's and Bob's outcome, respectively. The
    cross-hatch areas show outcomes that are non-coincident given $\Delta
T=3/2$.\label{fig:lhv-overlap}}
\end{figure}
We compute the probability of coincidence in \cref{fig:lhv-overlap},
$P(\Lambda_i)=(2N-1+p)/2N$ and find that it is independent of $i$. Therefore,
\begin{equation}\label{eqn:gamma-p}
    \gamma_N=(2N-1+p)/2N.
\end{equation}
In addition, for $1\leq i \leq 2N-1$,
\begin{equation}
    \begin{split}
        P(X_i=+1|\Lambda_i)=& \frac{P(X_i=+1)}{P(\Lambda_i)}\\
        = & \frac{2N-1}{2N-1+p}
    \end{split}
\end{equation}
and
\begin{equation}
    \begin{split}
        P(X_i=-1|\Lambda_i)=& \frac{P(X_i=+1\cap
        \Lambda_i)}{P(\Lambda_i)}\\
        =& \frac{p}{2N-1+p}
    \end{split}
\end{equation}
which gives
\begin{equation}
    \begin{split}
        & E(X_i|\Lambda_i)\\
        & =P(X_i=+1|\Lambda_i)-P(X_i=-1|\Lambda_i)\\
        & =\frac{2N-1-p}{2N-1+p}
    \end{split}
\end{equation}
for $1\leq i\leq 2N-1$. A similar calculation yields
\begin{equation}
    E(X_{2N}|\Lambda_{2N})=-\frac{2N-1-p}{2N-1+p}.
\end{equation}
We now insert the predictions of the LHV model into
\cref{eqn:pbc-coincident} to get
\begin{equation}
    \begin{split}
        S_{\text{LHV},N}\defeq \big|E(X_1|\Lambda_1)+E(X_2|\Lambda_2) \big|
        +\cdots\\ +
        \big|E(X_{2N-1}|\Lambda_{2N-1})-E(X_{2N}|\Lambda_{2N})\big|.
    \end{split}
\end{equation}
As we want the LHV model to mimic the predictions of quantum mechanics (from
\cref{eqn:pbc-qm}) we put
\begin{equation}
    S_{\text{LHV},N}= 2N\cos \left( \frac{\pi}{2N} \right)
\end{equation}
which gives
\begin{equation}
    \frac{2N-1-p}{2N-1+p}= \cos \left( \frac{\pi}{2N} \right).
\end{equation}
Solving for $p$ we get
\begin{equation}
    p=(2N-1)\tan^2\left(\frac{\pi}{4N}\right)
\end{equation}
and \cref{eqn:gamma-p} then gives us
\begin{equation}
    \gamma_N=\frac{2N-1}{2N}\left(1+\tan^2\left( \frac{\pi}{4N} \right)\right)
\end{equation}
which coincides with $\gamma_{\text{crit},N}$. The model in \cref{fig:lhv}
is a constructive proof of \cref{thm:sufficiency} as it produces the same
output statistics as quantum mechanics with coincidence probability
$\gamma_{\text{crit},N}$. We finally note that it is trivial to modify the LHV
model to give any $\gamma \leq \gamma_{\text{crit},N}$ which finishes the proof.

The LHV model in \cref{fig:lhv} mimics almost every statistical property of a
truly quantum-mechanical experiment (see~\cite{Larsson2004Bell}) and shows it is
possible to fake a violation of the PBC inequality if the coincidence
probability is lower than the critical value. It is therefore important that any
experiment relying on a PBC inequality violation takes the coincidence
probability into account before ruling out a classical model.

As the number of measurement settings $N$ goes to infinity the critical coincidence
probability $\gamma_{\text{crit},N}$ goes to 1. Therefore, achieving the required
coincidence becomes harder as more measurement settings are used. If we define
$\eta_{\text{crit},N}$ as the
minimum required \emph{detection efficiency} for a violation of the PBC inequality free
of the detection loophole (see~\cite{Cabello2009Minimum} for full details)
we get
\begin{equation}\label{eqn:eta}
    \eta_{\text{crit},N}=
    \frac{2}{\frac{N}{N-1}\cos\left(\frac{\pi}{2N}\right)+1}
\end{equation}
and note that $\gamma_{\text{crit},N}>\eta_{\text{crit},N}$ for all $N\geq 2$.
In addition, the critical coincidence probability for the special CHSH case
$N=2$ is \SI{87.87}{\percent} which agrees with previous
works~\cite{Larsson2004Bell}. See \cref{tab:critical} for critical probabilities
for the cases $N=2,3,4,5$ and note that both $\gamma_{\text{crit},N}$ and
$\eta_{\text{crit},N}$ are strictly increasing in $N$. Note that a loophole-free
experiment requires \emph{both} the coincidence probability and detection
efficiency be in excess of their respective thresholds.

\begin{table}
    \centering
    \begin{tabular}{cll}
        $N$ & $\gamma_{crit,N}$ & $\eta_{crit,N}$\\
        \midrule
        2 (CHSH) & \SI{87.87}{\percent} & \SI{82.84}{\percent}\\
        3 & \SI{89.32}{\percent} & \SI{86.99}{\percent}\\
        4 & \SI{90.96}{\percent}& \SI{89.61}{\percent}\\
        5 & \SI{92.26}{\percent}& \SI{91.37}{\percent} \\
        $N$	& \multicolumn{2}{c}{Increases with $N$}
    \end{tabular}
    \caption{Critical coincidence probabilities $\gamma_{crit,N}$ and detection
        probabilities $\eta_{crit,N}$ for a loophole-free violation of the PBC
        equality for 2,3,4, and 5 measurement settings. Note that $N=2$
    corresponds to the special CHSH case.\label{tab:critical}}
\end{table}

While reaching $\gamma_{\text{crit},N}$ is less challenging for small $N$, some
applications do requires a PBC inequality with a higher number of settings. An
example is the Franson interferometer~\cite{Franson1989Bell}, where
postselection leads to a loophole for $N=2$ but not for $N\geq
3$~\cite{Jogenfors2014Energy}. In fact,
$N=5$ is optimal for that setup in terms of violation, however
\cref{tab:critical} shows that the corresponding minimal coincidence probability
is as high as \SI{92.26}{\percent}, which is a considerable challenge.

\section{Conclusion}

The PBC inequality is a powerful tool for testing local realism in applications
where the CHSH test is insufficient. We have found the minimum required
coincidence probability for a violation of the PBC inequality without the
fair-coincidence assumption. This bound is tight, so any application of the PBC
inequality that relies on a violation of local realism must have at least this
coincidence probability, unless the perilous fair-coincidence assumption is to
be made. If not, and if the coincidence probability is below the critical
threshold, an attacker can construct a local realist model from which all
measurements can be predicted.

\printbibliography{}
\end{document}